\def\Box{\vcenter{\vbox{\hrule\hbox{\vrule
     \vbox to 8.8pt{\hbox to 10pt{}\vfill}\vrule}\hrule}}}
\newcommand{\al}{{\alpha}}
\newcommand{\eps}{{\epsilon}}
\newcommand{\ra}{{\rightarrow}}
\newcommand{\F}{{\mathbb F}}
\newcommand{\Q}{{\mathbb Q}}
\newcommand\C{{\mathcal C}}
\def\Tr{\operatorname{Tr}}
\newtheorem{thm}{Theorem}[section]
\newtheorem{lemma}[thm]{Lemma}
\newtheorem{conj}[thm]{Conjecture}
\def\Tr{\operatorname{Tr}}
\numberwithin{equation}{section}
\begin{document}

\title[Weight Distributions of Binary Cyclic Codes]
{Binary Cyclic codes with two primitive nonzeros}

\author[Tao Feng, Ka Hin Leung, Qing Xiang]
{Tao Feng$^1$, Ka Hin Leung, Qing Xiang}
\thanks{$^1$Research supported in part by Fundamental Research Fund for the Central Universities of China, Zhejiang Provincial Natural Science Foundation (LQ12A01019), National Natural Science Foundation of China (11201418) and Research Fund for Doctoral Programs from the Ministry of Education of China (20120101120089).}

\address{Department of Mathematical Sciences, Zhejiang University, Hangzhou, Zhejiang 310027, P. R. China, email:{\tt tfeng@zju.edu.cn}}

\address{Department of Mathematical Sciences, National University of Singapore, Kent Ridge, Singapore 119260, Republic of Singapore, email:{\tt matkhl@nus.edu.sg}}

\address{Department of Mathematical Sciences, University of Delaware, Newark, DE 19716, USA,
email: {\tt xiang@math.udel.edu}}

\keywords{}

\date{}

\begin{abstract}
In this paper, we make some progress towards a well-known conjecture on the minimum weights of binary cyclic codes with two primitive nonzeros.  We also determine the Walsh spectrum of $\Tr(x^d)$ over $\F_{2^{m}}$ in the case where $m=2t$, $d=3+2^{t+1}$ and $\gcd(d, 2^{m}-1)=1$.
\end{abstract}
\maketitle

\section{Introduction}\label{intro}

In this paper, we are concerned with the weight distributions of binary cyclic codes with two primitive nonzeros. Let $q=2^m$,  where $m\geq 1$ is an integer, and $\F=\F_q$, the finite field of size $q$. Let $\al$ be a primitive element of $\F$, and let $\C_{d}$ be the binary cyclic code of length $q-1$ with two nonzeros $\al^{-1}$ and $\al^{-d}$, where $d$ is an integer such that $1\leq d\leq q-2$, $\gcd(d,q-1)=1$. Then $\C_d$ is a $[q-1, 2m]_2$ code, and its codewords are given by
\[
c(a,b)=\big(\Tr(a+b),\Tr(a\al^d+b\al),\ldots,\Tr(a\al^{(q-2)d}+b\al^{q-2})\big),\;a,b\in\F,
\]
where $\Tr$ is the absolute trace function defined on $\F$. 

Let us consider the Hamming weights of $c(a,b)$, where $a,b\in \F$. When exactly one of $a, b$ is $0$, the codeword $c(a,b)$ has weight $q/2$. When $a,b$ are both nonzero, $c(a,b)$ has weight
\begin{align}\label{relation}
 \frac{1}{2}\sum_{i=0}^{q-2}\big(1-(-1)^{\Tr(a\al^{di}+b\al^i)}\big)=
\frac{1}{2}\big(q-\sum_{x\in\F}(-1)^{\Tr(x^d+ba^{-\frac{1}{d}}x)}\big),
\end{align}
where we use $1/d$ to denote the unique integer $j$ such that $jd\equiv 1\pmod{q-1}$ and $1\leq j\leq q-2$. 
Therefore, the weight distribution of $\C_{d} $ is completely determined by the Walsh spectrum of the function $f_d:\F\ra \F_2$, $x\mapsto \Tr(x^d)$, and vice versa. Here {\it the Walsh coefficients} of $f_d$ are defined by
\[
W_d(a)=\sum_{x\in \F}(-1)^{\Tr(x^d+ax)},\quad a\in \F.
\]
The distribution of $W_d(a), a\in \F$, is called {\it the Walsh spectrum} of $f_d$. The problem of determining the Walsh spectrum of $f_d$ is also equivalent to the problem of determining the crosscorrelations of an m-sequence and its $d$-decimation. We refer the reader to the appendix in \cite{DK} for more details on various formulations of this problem. A lot of work has gone into determining the Walsh spectrum of $f_d$ when $d$ takes special forms, see \cite{Niho}, \cite{CD}, \cite{CCD}, and \cite{HX}. There are a few general conjectures on the Walsh spectrum of $f_d$ which have proved to be quite challenging. We refer the reader to the recent paper \cite{AL} for a list of these conjectures, and some recent progress made on them.

In this paper, we are primarily interested in the following well-known conjecture due to Dilip V. Sarwate, cf. \cite{AL}; see \cite[p.~258]{PC} also.

\begin{conj}\label{conj1}Let $m=2t$, and let $\C_d$ be the $[2^m-1,2m]$ binary cyclic code with two nonzeros $\al^{-1}$ and $\al^{-d}$ ($gcd(d,2^m-1)=1$), where $\al$ is a primitive element of $\F$. Then the minimum distance of $\C_d$ is $\leq 2^{m-1}-2^t$. \end{conj}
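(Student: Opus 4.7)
The plan is to reformulate the conjecture as a lower bound on Walsh coefficients and then exploit the quadratic subfield $\F_{2^t} \subset \F$. By (\ref{relation}), a codeword $c(a,b)$ with $a, b \in \F^*$ has weight $\frac{1}{2}(q - W_d(ba^{-1/d}))$, and codewords with $ab = 0$ all have weight $q/2$. Hence the conjecture is equivalent to the inequality
\[
\max_{c \in \F^*} W_d(c) \geq 2^{t+1}.
\]

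My first step is a subfield averaging identity. Since $m = 2t$, every element of $\F_{2^t}$ has zero trace to $\F_2$, and $\F_{2^t}$ is closed under $y \mapsto y^d$, so for each $c \in \F_{2^t}$ one has $\sum_{x \in \F_{2^t}} (-1)^{\Tr(x^d + cx)} = 2^t$. Applying orthogonality $\sum_{c \in \F_{2^t}} (-1)^{\Tr(cx)} = 2^t \cdot [x \in \F_{2^t}]$ inside $W_d(c)$ yields
\[
\sum_{c \in \F_{2^t}} W_d(c) = 2^t \sum_{x \in \F_{2^t}} (-1)^{\Tr(x^d)} = 2^{2t} = q.
\]
Because $x \mapsto x^d$ permutes $\F$, we have $W_d(0) = 0$, and the $2^t - 1$ remaining Walsh coefficients indexed by $\F_{2^t}^*$ still sum to $q$. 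In particular $\max_{c \in \F_{2^t}^*} W_d(c) > 2^t$, which already produces a codeword of weight strictly below $q/2$ but falls short of the conjectured bound by roughly a factor of two on the Walsh side.

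To close this factor of two, my second step would be to establish the $2$-adic divisibility $2^{t+1} \mid W_d(c)$ for every $c \in \F_{2^t}^*$. Combined with the averaging bound $\max > 2^t$ and the integrality of Walsh coefficients, such a divisibility immediately promotes the maximum to $\geq 2^{t+1}$ and so completes the proof. The divisibility is natural to expect: a Stickelberger-type expansion of $W_d(c)$ into products of Gauss sums, coupled with the Frobenius symmetry $W_d(c) = W_d(c^{2^t})$ active when $c \in \F_{2^t}$, groups the contributing multiplicative characters into $\F_{2^t}$-rational orbits and should gain one $2$-adic factor beyond the generic McEliece-type bound.

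The principal obstacle is precisely this uniform $2$-adic divisibility. For structured families of exponents (Gold, Kasami, Niho, and their variants) the required bound can be read off from known explicit Walsh spectra; the determination of the spectrum for $d = 3 + 2^{t+1}$ carried out later in this paper does exhibit the needed divisibility. A uniform argument valid for every $d$ with $\gcd(d, 2^m-1) = 1$ appears to demand a refinement of McEliece's theorem sensitive to the subfield structure, and this is precisely where Sarwate's conjecture has resisted the above strategy. A realistic intermediate goal is therefore to verify the divisibility (and hence the conjecture) one cyclotomic coset of $d$ at a time.
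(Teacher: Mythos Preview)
The statement you are attempting is Conjecture~\ref{conj1}, which the paper does \emph{not} prove in general; it is stated as an open problem. So there is no ``paper's own proof'' of the full statement to compare against.

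Your Step~1 is correct and is exactly Lemma~\ref{known}(2) with $u=1$; the conclusion $\max_{c\in\F_{2^t}^*}W_d(c)>2^t$ follows immediately. Your Step~2, the divisibility $2^{t+1}\mid W_d(c)$ for every $c\in\F_{2^t}^*$, is where the entire difficulty lies, and you rightly flag it as unproven. The heuristic you offer does not obviously work: the Frobenius symmetry $W_d(c)=W_d(c^{2^t})$ is a tautology when $c\in\F_{2^t}$, so it imposes no constraint beyond the generic McEliece bound, and it is unclear why the Stickelberger expansion would gain an extra factor of~$2$ merely from $c$ lying in the subfield. As written, the proposal is a reformulation of the conjecture rather than a proof.

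For comparison, the paper's unconditional contribution (Theorem~\ref{bound}) obtains the weaker bound $W_d(a)>2^t+2^{\lfloor t/2\rfloor}$ for some $a\in\F_{2^t}$, not via divisibility but through a second-moment estimate on the auxiliary sums $M_b=\sum_{x\in\F_{2^t}}(-1)^{\Tr_m((x+b)^d)}$ and a group-ring parity argument. Remark~(2) following that theorem then notes that a hypothesis \emph{weaker} than yours---namely $2^t\mid W_d(a)$ for \emph{all} $a\in\F$---combined with this sharper lower bound already yields the conjecture; this is how the Niho case is handled. For the family $d=1+2^i+2^{i+t}$ in Section~3, the paper bypasses divisibility entirely and computes $W_d(a)$ for $a\in\F_{2^t}$ directly, finding it to be $0$ or $2^{t+1}$; this confirms your $2^{t+1}$-divisibility on the subfield for that particular family, but only as a by-product of an explicit evaluation, not via any structural $2$-adic argument of the kind you sketch.
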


Using (\ref{relation}), the existence of a nonzero codeword of weight $\leq 2^{m-1}-2^t$ is equivalent to the existence of nonzero $a\in\F$ such that $W_d(a)\geq 2^{t+1}$. Charpin \cite{PC} showed that Conjecture 1.1 is true when $d\equiv 2^j\pmod{2^t-1}$, for some $j$, $0\leq j\leq t-1$.  (Such $d$'s are called the Niho exponents.) \\

In this paper, without putting any conditions on $d$ (of course, $\gcd(d, 2^m-1)=1$ is still assumed), we shall prove an upper bound on the minimum distance of $\C_d$, which is slightly weaker than the bound in Conjecture 1.1. Furthermore, we will determine the weight distributions of $\C_d$ for two special classes of $d$; one of the two classes was previously considered by Cusick and Dobbertin \cite{CD}, the other class is new. Details are given in Section 3. Throughout the rest of this paper, we shall fix $m=2t$. We use $\Tr_m$, $\Tr_t$ to denote the absolute traces defined on $\F$ and $L:=\F_{2^t}$,  respectively. Also we use  $\Tr_{m/t}$ (resp. ${\rm N}_{m/t}$) to denote the relative trace (resp. norm) from $\F$ to $\F_{2^t}$. We shall drop the subscripts if we believe that no confusion will arise.

\section{An upper bound on the minimum weight of $\C_d$}

First, we give a summary of some well-known identities involving the Walsh coefficients $W_d(a)$, $a\in \F$. We refer the reader to \cite{He76, PC,DK,tf} for the proof of these identities.\\

\begin{lemma}\label{known}
(1) $\sum_{a\in\F}W_d(a)=q$, $\sum_{a\in\F}W_d(a)^2=q^2$.\\

(2) $$\sum_{a\in\F_{2^t}}W_d(au)=\begin{cases}q,\text{ if }u\in\F_{2^t}^*;\\0,\text{ if }u\notin \F_{2^t}.\end{cases}$$
\end{lemma}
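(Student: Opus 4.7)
My plan is to expand each Walsh coefficient as a character sum, exchange the order of summation, and apply orthogonality of additive characters to collapse the inner sum. The whole argument is a routine Fourier-analytic exercise, except for one small number-theoretic point at the end.

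For (1), writing
\[
\sum_{a\in\F}W_d(a)=\sum_{x\in\F}(-1)^{\Tr(x^d)}\sum_{a\in\F}(-1)^{\Tr(ax)},
\]
the inner sum is $q$ if $x=0$ and $0$ otherwise, so the total is $q$. For the second moment,
\[
\sum_{a\in\F}W_d(a)^2=\sum_{x,y\in\F}(-1)^{\Tr(x^d+y^d)}\sum_{a\in\F}(-1)^{\Tr(a(x+y))},
\]
and the inner sum forces $y=x$ (in characteristic two); since $x^d+x^d=0$, the surviving sum is $\sum_x q=q^2$.

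For (2) I apply the same trick with $a$ restricted to $L$. Using transitivity of the trace, $\Tr_m(aux)=\Tr_t(a\,\Tr_{m/t}(ux))$ for $a\in L$, so
\[
\sum_{a\in L}W_d(au)=\sum_{x\in\F}(-1)^{\Tr_m(x^d)}\sum_{a\in L}(-1)^{\Tr_t(a\,\Tr_{m/t}(ux))}=2^t\sum_{\substack{x\in\F\\ \Tr_{m/t}(ux)=0}}(-1)^{\Tr_m(x^d)}.
\]
Because $\Tr_{m/t}(y)=y+y^{2^t}$ vanishes iff $y\in L$, the summation set is $u^{-1}L$. Reparameterizing $x=u^{-1}y$ with $y\in L$, and noting that $\gcd(d,2^t-1)\mid\gcd(d,2^m-1)=1$ so $y\mapsto y^d$ permutes $L$, the inner sum becomes $\sum_{z\in L}(-1)^{\Tr_m(u^{-d}z)}$. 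A second use of trace transitivity, now with $z\in L$, gives $\Tr_m(u^{-d}z)=\Tr_t(z\,\Tr_{m/t}(u^{-d}))$, so this is $2^t$ if $\Tr_{m/t}(u^{-d})=0$ and $0$ otherwise, i.e., $2^t$ if $u^{-d}\in L$ and $0$ otherwise.

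The only non-bookkeeping step is the equivalence $u^{-d}\in L\Leftrightarrow u\in L$. The ``$\Leftarrow$'' direction is immediate. For ``$\Rightarrow$'', if $u^d\in L^*$ then $u^{d(2^t-1)}=1$, so the order of $u$ divides $\gcd(d(2^t-1),2^m-1)=(2^t-1)\gcd(d,2^t+1)$; but $\gcd(d,2^t+1)=1$ because $2^t+1\mid 2^m-1$ and $\gcd(d,2^m-1)=1$. Hence $u^{2^t-1}=1$, forcing $u\in L^*$. Combining the two cases yields $2^t\cdot 2^t=q$ when $u\in L^*$ and $0$ when $u\notin L$, as claimed.
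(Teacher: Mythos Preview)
Your proof is correct in every detail; the orthogonality arguments for part~(1) are entirely standard, and your treatment of part~(2) via trace transitivity, the reparametrization $x=u^{-1}y$, and the final divisibility argument $\gcd(d,2^t+1)=1$ is clean and complete. The paper itself does not prove this lemma at all---it simply cites \cite{He76,PC,DK,tf}---so there is no approach to compare against; you have supplied a full self-contained proof where the paper gives none.
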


Now we are ready to prove our first result.

\begin{thm}\label{bound}
Let $m=2t$, and let $\C_d$ be the $[2^m-1,2m]$ binary cyclic code with two nonzeros $\al^{-1}$ and $\al^{-d}$ ($gcd(d,2^m-1)=1$), where $\al$ is a primitive element of $\F$. Then the minimum distance of $\C_d$ is $<2^{m-1}-2^{t-1}-2^{\lfloor t/2\rfloor-1}$. That is, there is a nonzero $a\in \F$ such that $W_d(a)> 2^t+2^{\lfloor t/2\rfloor}$.
\end{thm}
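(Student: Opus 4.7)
I would proceed by contradiction: assume $W_d(a)\le M:=2^t+2^{\lfloor t/2\rfloor}$ for every $a\in\F^*$, and aim to violate one of the identities of Lemma~\ref{known}. Since $\gcd(d,q-1)=1$, the map $x\mapsto x^d$ is a bijection on $\F$, so
\[
W_d(0)=1+\sum_{y\in\F^*}(-1)^{\Tr(y)}=0.
\]
Lemma~\ref{known}(2) at $u=1$ then gives $\sum_{a\in\F_{2^t}^*}W_d(a)=q$, Lemma~\ref{known}(1) gives $\sum_{a\in\F^*}W_d(a)^2=q^2$, and Cauchy--Schwarz on $\F_{2^t}^*$ (of size $2^t-1$) forces $T_0:=\sum_{a\in\F_{2^t}^*}W_d(a)^2\ge q^2/(2^t-1)$. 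Setting $\delta_a=W_d(a)-2^t$ converts these into $\sum_{a\in\F_{2^t}^*}\delta_a=2^t$ and $\sum_{a\in\F_{2^t}^*}\delta_a^2\ge q/(2^t-1)$. The contradiction assumption says $\delta_a\le 2^{\lfloor t/2\rfloor}$, and combined with the positive bias $\sum\delta_a=2^t>0$ already forces at least $2^{\lceil t/2\rceil}$ of the $\delta_a$'s to be strictly positive.

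To bring the exponent $\lfloor t/2\rfloor$ in quantitatively, the natural tool is the Fourier expansion on $\F_{2^t}$: for $a\in\F_{2^t}$,
\[
W_d(a)=\sum_{\gamma\in\F_{2^t}}(-1)^{\Tr_t(a\gamma)}A_\gamma,\qquad A_\gamma:=\sum_{x\colon\Tr_{m/t}(x)=\gamma}(-1)^{\Tr_m(x^d)},
\]
where $A_0=2^t$ (because $\Tr_m(x^d)=0$ for every $x\in\F_{2^t}$), $\sum_{\gamma\ne 0}A_\gamma=-2^t$ (from $W_d(0)=0$), each $A_\gamma\in\Zz$ with $|A_\gamma|\le 2^t$, and Parseval on $\F_{2^t}$ yields
\[
T_0=2^{3t}+2^t\sum_{\gamma\ne 0}A_\gamma^2,\qquad \sum_{\gamma\ne 0}A_\gamma^2\ge \frac{2^{2t}}{2^t-1}.
\]

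The main obstacle is to convert these pieces into an actual contradiction against $M=2^t+2^{\lfloor t/2\rfloor}$. Ordinary Parseval pigeonholing on $\F_{2^t}$ only yields $\max_{a\in\F_{2^t}}|W_d(a)-2^t|^2\ge 2^{2t}/(2^t-1)>2^{2\lfloor t/2\rfloor}$, but this lower bound is realised trivially at $a=0$ (where $|W_d(0)-2^t|=2^t$), so one has to refine the argument both to restrict the extremum to $\F_{2^t}^*$ and to push the large deviation onto the \emph{positive} side (forcing some $W_d(a)>2^t+2^{\lfloor t/2\rfloor}$ rather than merely some $W_d(a)<2^t-2^{\lfloor t/2\rfloor}$). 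My plan is to couple the positive-bias identity $\sum_{a\in\F_{2^t}^*}\delta_a=2^t>0$ with the integrality of the $A_\gamma$'s, and, if needed, to exploit the vanishing sums $\sum_{a\in u\F_{2^t}^*}W_d(a)=0$ on the remaining cosets (Lemma~\ref{known}(2) for $u\notin\F_{2^t}$) to redistribute $\sum_{a\in\F^*}W_d(a)^2=q^2$ in a way that is incompatible with a uniform bound $\delta_a\le 2^{\lfloor t/2\rfloor}$ on $\F_{2^t}^*$. This closing step, where the exponent $\lfloor t/2\rfloor$ has to emerge on the right side as opposed to a slightly weaker $t/2$ or even $(t-1)/2$ (depending on parity), is the delicate part of the proof.
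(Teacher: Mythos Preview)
Your quantities $A_\gamma$ are exactly the paper's $M_b$ (with $\gamma=\Tr_{m/t}(b)$), and your Fourier expansion of $W_d(a)$ for $a\in\F_{2^t}$ is essentially the paper's identity~(2.2). So the setup is right, but there is a genuine gap at the point you yourself flag as ``delicate.''

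\textbf{The crucial missing bound.} Your Cauchy--Schwarz step only yields
\[
\sum_{\gamma\ne 0}A_\gamma^2\;\ge\;\frac{2^{2t}}{2^t-1},
\]
so pigeonholing over the $2^t-1$ nonzero $\gamma$'s gives merely $\max_{\gamma\ne 0}|A_\gamma|\ge 2^t/(2^t-1)\approx 1$. That is far too weak to produce the exponent $\lfloor t/2\rfloor$; in fact all of the identities you list are compatible with $W_d(a)=2^t+O(1)$ on $\F_{2^t}^*$, so no purely analytic juggling of them can close the gap. The paper obtains the much stronger estimate $\sum_{\gamma\ne 0}A_\gamma^2\ge 2^{2t}$ by a combinatorial observation you are missing: writing $A_\gamma$ as a character sum over the multiset $R_d=\{(x+c)^d+(x+\bar c)^d:x\in\F_{2^t}\}$ (here $c\in G\setminus\{1\}$, $\bar c=c^{2^t}$), one notes that the fixed-point-free involution $x\mapsto x+c+\bar c$ on $\F_{2^t}$ forces every multiplicity in $R_d$ to be \emph{even}. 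Hence the sum of squared multiplicities is at least $4\cdot 2^{t-1}=2^{t+1}$, and Fourier inversion on $(\F_{2^t},+)$ converts this into $\sum_{\gamma}A_\gamma^2\ge 2^{2t+1}$, i.e.\ $\sum_{\gamma\ne 0}A_\gamma^2\ge 2^{2t}$. Now pigeonhole gives some $\gamma_0\ne 0$ with $|A_{\gamma_0}|\ge 2^t/\sqrt{2^t-1}>2^{\lfloor t/2\rfloor}$.

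\textbf{The sign issue is then easy.} Once you have such a $\gamma_0$, your own Fourier expansion finishes the job: with $\epsilon=\pm1$ chosen so that $\epsilon A_{\gamma_0}=-|A_{\gamma_0}|$,
\[
\sum_{a\in\F_{2^t}}W_d(a)\bigl(1-\epsilon(-1)^{\Tr_t(a\gamma_0)}\bigr)=2^{2t}+2^t|A_{\gamma_0}|,
\]
the weights are nonnegative and sum to $2^t$, so some $W_d(a)\ge 2^t+|A_{\gamma_0}|>2^t+2^{\lfloor t/2\rfloor}$; this $a$ is nonzero since $W_d(0)=0$. So the ``positive-side'' difficulty you worried about dissolves immediately once the correct second-moment bound is in hand---the real content is the parity argument on $R_d$, not the averaging.
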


\begin{proof}
 For any nonzero $b\in \F\setminus\F_{2^t}$, by direct calculations we have
\begin{eqnarray}\label{tots}
\sum_{a\in\F_{2^t}}W_d(a)\left(1-(-1)^{\Tr_m(ba)}\eps_b\right)=2^m+2^t|M_b|,
\end{eqnarray}
where $M_b=\sum_{x\in\F_{2^t}}(-1)^{\Tr_m((x+b)^d)}$ and $\epsilon_b=\pm 1$ is chosen such that $\epsilon_b M_b=-|M_b|$. For $b\in \F\setminus\F_{2^t}$, it will be convenient to introduce a function $p_b$ on $\F_{2^t}$ defined by $$p_b(a):=1-(-1)^{\Tr_m(ba)}\eps_b,\;\forall  a\in \F_{2^t}.$$ 
Then for $b\in \F\setminus\F_{2^t}$, we have $\sum_{a\in\F_{2^t}}p_b(a)=2^t$, $p_b(a)\geq 0$, and (\ref{tots}) can be rewritten as
\begin{eqnarray}\label{totsnew}
\sum_{a\in\F_{2^t}}W_d(a)p_b(a)=2^m+2^t|M_b|.
\end{eqnarray}

Next we compute
\begin{align*}\sum_{b\in\F}M_b^2&=2^t\sum_{b\in\F}\sum_{x\in \F_{2^t}}(-1)^{\Tr_m((x+b)^d+b^d)}\\
&=2^t|\F|+2^t\sum_{b\in\F}\sum_{x\in \F_{2^t}^*}(-1)^{\Tr_m(x^d((1+b)^d+b^d))}\\
&=2^t|\F|+2^t\left(2^t\cdot |\{b\in\F\mid \Tr_{m/t}((1+b)^d+b^d)=0\}|-|\F|\right)\\
&=2^{2t}|\{b\in\F\mid (1+b)^d+b^d\in\F_{2^t}\}|.
\end{align*}

Since $M_b=2^t$ if $b\in\F_{2^t}$, we thus have
\[
\sum_{b\in\F\setminus\F_{2^t}}M_b^2=2^{2t}\cdot |\{b\in\F\setminus\F_{2^t}\mid (1+b)^d+b^d\in\F_{2^t}\}|.
\]

Let $c\in\F^*$ be an element of order $2^t+1$. Then a system of coset representatives of $(\F_{2^t}, +)$ in $(\F, +)$ is given by $uc$, $u\in\F_{2^t}$. Since $M_{b+x}=M_b$ for any $x\in\F_{2^t}$, and $\F\setminus\F_{2^t}=\cup_{u\in \F_{2^t}^*} (uc + \F_{2^t})$, we get 
\begin{eqnarray}\label{sumoveru}
\sum_{u\in\F_{2^t}^*}M_{uc}^2=2^t\cdot |\{b\in\F\setminus\F_{2^t}\mid (1+b)^d+b^d\in\F_{2^t}\}|.
\end{eqnarray}

If $u\in\F_{2^t}^*$, we have
\begin{align*}
M_{uc}&=\sum_{x\in\F_{2^t}}(-1)^{\Tr_m((x+uc)^d)}\\
&=\sum_{x\in\F_{2^t}}(-1)^{\Tr_t\left(u^d\left((x+c)^d+(x+c^{2^t})^d\right)\right)}\\
&=\sum_{z\in R_d}\psi_{u^d}(z),
\end{align*}
where $R_d$ denotes the multiset ``$(x+c)^d+(x+c^{2^t})^d$, $x\in\F_{2^t}$" (each element of $R_d$ indeed belongs to $\F_{2^t}$), and $\psi_{u^d}$ is the additive character of $\F_{2^t}$ defined by
\[
\psi_{u^d}(x)=(-1)^{\Tr_t(u^dx)},\;x\in\F_{2^t}.
\]
We write the multiset $R_d$ as a group ring element: $R_d=\sum_{g\in\F_{2^t}}a_g[g]\in\Q[(\F_{2^t},+)]$. Then $\sum_{a\in \F_{2^t}}a_g=2^t$, each $a_g$ is a nonnegative integer, and for $u\in\F_{2^t}^*$, $M_{uc}=\psi_{u^d}(R_d)$. Furthermore note that each coefficient $a_g$ of $R_d$ must be even since $(x+c)^d+(x+c^{2^t})^d=((x+c+c^{2^t})+c)^d+((x+c+c^{2^t})+c^{2^t})^d$ for any $x\in \F_{2^t}$, and $c+c^{2^t}\neq 0$. We compute the coefficient of the identity (i.e., the zero element of $\F_{2^t}$) in $R_dR_d^{(-1)}$ in two ways, where $R_d^{(-1)}=\sum_{g\in \F_{2^t}}a_g[-g]$. In fact, we have $R_d^{(-1)}=R_d$ here.  On the one hand, this coefficient is equal to 
\[
\sum_{g\in\F_{2^t}}a_g^2\geq 2^2\cdot 2^{t-1}=2^{t+1}.
\]
On the other hand, by the inversion formula (see, for example \cite{tf}), the coefficient of the identity element in $R_dR_d^{(-1)}$ is equal to $\frac{1}{2^t}\sum_{u\in\F_{2^t}}\psi_{u^d}(R_d)^2=\frac{1}{2^t}\sum_{u\in\F_{2^t}}M_{uc}^2$. It follows that 
$$\sum_{u\in\F_{2^t}}M_{uc}^2\geq 2^{2t+1}.$$
Using (\ref{sumoveru}) we now obtain
\[
(2^t)^2+2^t\cdot |\{b\in\F\setminus\F_{2^t}\mid (1+b)^d+b^d\in\F_{2^t}\}|\geq 2^{2t+1}.
\]
Therefore
\[
|\{b\in\F\setminus\F_{2^t}\mid (1+b)^d+b^d\in\F_{2^t}\}|\geq 2^t,
\]
with equality if and only if $R_d$ has size $2^{t-1}$ as a set. As a consequence, there exists an element $u\in\F_{2^t}^*$ such that
\[
|M_{uc}|\geq \sqrt{2^{2t}/(2^t-1)}>2^{\lfloor t/2\rfloor}.
\]
Using the above element $uc$ as $b$ in Eqn. (\ref{totsnew}), we see that there is some $a\in\F_{2^t}$ such that $W_d(a)> 2^t+2^{\lfloor t/2\rfloor}$ by an averaging argument. The proof of the theorem is now complete.
\end{proof}

\noindent {\bf Remarks.} (1) In the case where $d=1+2^i$,  for $x\in\F_{2^t}$, we have $\Tr_m((x+b)^d)=\Tr_t(xv)+\Tr_m(b^d)$, where $v=\Tr_{m/t}(b)^{2^i}+\Tr_{m/t}(b)^{2^{-i}}$. Choosing $b\in\F\setminus\F_{2^t}$ such that $\Tr_{m/t}(b)=1$, we have $v=0$, and $|M_b|=2^t$.  We see that Conjecture 1.1 is true in this case by using (\ref{totsnew}).

(2) If $d$ is a Niho exponent, then from \cite[p.~253]{PC} we know that $2^t|W_d(a)$ for all $a\in\F$. Combining this divisibility result with the conclusion of Theorem~\ref{bound} that there is some $a\in\F$ with $W_d(a)> 2^t+2^{\lfloor t/2\rfloor}$, we immediately get $W_d(a)\geq 2^{t+1}$.  The same argument shows that more generally, for any $d$, $1\leq d\leq q-2$, $\gcd(d, q-1)=1$, such that $2^t|W_d(a)$ for all $a\in\F$, Conjecture 1.1 is also true.

\section{The Walsh spectrum of $\Tr(x^d)$ with $d=1+2^i+2^{i+t}$}

In this section, we assume that $d=1+2^i+2^{i+t}$ for some $i$, $0<i<t-1$, and $\gcd(d,2^m-1)=1$. Such a $d$ is not a Niho exponent. First, we show that for any $d$ of the aforementioned form, Conjecture 1.1 is true. Secondly, specializing to the $i=1$ case, i.e., $d=3+2^{t+1}$, we determine the Walsh spectrum of $\Tr(x^d)$ completely.  

For a nonzero integer $n$, we use $v_2(n)$ to denote the highest power of $2$ dividing $n$.
 
\begin{lemma}\label{powerof2}
Let $m=2t$ and $d=1+2^i+2^{i+t}$ for some $i$, $0<i<t-1$, with $\gcd(d, 2^m-1)=1$. Then $v_2(i+1)\geq v_2(t)$.
\end{lemma}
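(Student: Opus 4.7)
The plan is to first reduce the hypothesis $\gcd(d,2^m-1)=1$ to a cleaner coprimality condition, and then establish the desired $2$-adic inequality by contrapositive using a valuation analysis together with the standard factorization of $x^n\pm 1$.

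First I would split $2^m-1=(2^t-1)(2^t+1)$ and reduce $d=1+2^i+2^{i+t}$ modulo each factor separately. Modulo $2^t+1$ we have $2^{i+t}\equiv -2^i$, which cancels the middle term and leaves $d\equiv 1$; thus $\gcd(d,2^t+1)=1$ holds automatically. Modulo $2^t-1$ we have $2^{i+t}\equiv 2^i$, so $d\equiv 1+2^{i+1}\pmod{2^t-1}$. The entire coprimality hypothesis therefore collapses to the single condition $\gcd(1+2^{i+1},\,2^t-1)=1$.

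Next I would argue the contrapositive. Set $s=i+1$ and assume toward a contradiction that $v_2(s)<v_2(t)$. Let $f=\gcd(s,t)$; since $v_2(f)=\min(v_2(s),v_2(t))=v_2(s)$, the quotient $s/f$ is odd while $t/f$ is even. The first fact, applied with $x=2^f$ in the identity $x+1\mid x^n+1$ for odd $n$, shows $2^f+1\mid 2^s+1$. The second fact says $2f\mid t$, whence $2^f+1\mid 2^{2f}-1\mid 2^t-1$. Therefore $2^f+1\geq 3$ is a nontrivial common divisor of $2^s+1$ and $2^t-1$, contradicting the reduced coprimality and yielding $v_2(i+1)\geq v_2(t)$.

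The argument is entirely elementary and I do not anticipate any real obstacle. The only conceptual point worth highlighting is that the hypothesis $\gcd(d,2^m-1)=1$ places no constraint coming from the factor $2^t+1$, since the term $2^{i+t}$ is engineered to cancel $2^i$ there; the whole content of the lemma is therefore a statement about how $1+2^{i+1}$ interacts with $2^t-1$, which in turn is controlled by the relative $2$-adic valuations of $i+1$ and $t$.
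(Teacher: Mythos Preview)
Your proof is correct. The reduction step---splitting $2^m-1=(2^t-1)(2^t+1)$ and observing that $d\equiv 1\pmod{2^t+1}$ while $d\equiv 1+2^{i+1}\pmod{2^t-1}$---matches exactly what the paper needs (though the paper states the conclusion $\gcd(2^{i+1}+1,2^t-1)=1$ without spelling out the reduction). From that point the two arguments diverge slightly in mechanics: the paper multiplies by $2^{i+1}-1$ to turn the condition into $\gcd(2^{i+1}-1,2^t-1)=\gcd(2^{2(i+1)}-1,2^t-1)$, then invokes the identity $\gcd(2^a-1,2^b-1)=2^{\gcd(a,b)}-1$ to obtain $\gcd(i+1,t)=\gcd(2(i+1),t)$, which is visibly equivalent to $v_2(i+1)\geq v_2(t)$. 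You instead run a direct contrapositive, exhibiting $2^f+1$ with $f=\gcd(i+1,t)$ as an explicit common factor when the valuation inequality fails. Both routes are equally elementary; the paper's is a shade slicker in packaging, while yours has the virtue of producing the obstructing divisor concretely.
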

\begin{proof} Since $\gcd(d,2^m-1)=1$, we have $\gcd(2^{i+1}+1,2^t-1)=1$. It follows that $\gcd(2^{i+1}-1,2^t-1)=\gcd(2^{2(i+1)}-1,2^t-1)$. Therefore $\gcd(i+1,t)=\gcd(2(i+1),t)$, which is easily seen to be equivalent to $v_2(i+1)\geq v_2(t)$. The proof is complete.
\end{proof}
 
Let $c$ be a fixed element of $\F^*$ such that $c\neq 1$ and $c^{2^t+1}=1$. Then each element of $\F$ can be written uniquely as $x+yc$ with $x,y\in L:=\F_{2^t}.$ We shall write $\bar{c}:=c^{2^t}$, $\theta:=c+\bar{c}$. Now we compute $W_d(a+b\bar{c})$, where $a,b\in L$. For $x,y\in L$, we have
\begin{align*}
\Tr((x+yc)^d+(a+b\bar{c})(x+yc))&=\Tr(x{\rm N}_{m/t}(x+yc)^{2^i}+y{\rm N}_{m/t}(x+yc)^{2^i}c+ax+by+ayc+bx\bar{c})\\
&=\Tr_t(y(x^2+xy\theta+y^2)^{2^i}\theta)+\Tr_t(ay\theta+bx\theta)\\
&=\Tr_t(yx^{2^{i+1}}\theta+y^{1+2^i}\theta^{1+2^i}x^{2^i})+\Tr_t(y^{1+2^{i+1}}\theta+ay\theta+bx\theta)\\
&=\Tr_t\left((y^{2^{t-i-1}}\theta^{2^{t-i-1}}+y^{1+2^{t-i}}\theta^{1+2^{t-i}}+b\theta)x\right)+\Tr_t(y^{1+2^{i+1}}\theta+ay\theta).
\end{align*}
Therefore,
\begin{align*}
W_d(a+b\bar{c})&=\sum_{y\in L}\sum_{x\in L}(-1)^{\Tr_t\left((y^{2^{t-i-1}}\theta^{2^{t-i-1}}+y^{1+2^{t-i}}\theta^{1+2^{t-i}}+b\theta)x\right)+\Tr_t(y^{1+2^{i+1}}\theta+ay\theta)}\\
&=2^t\sum_{y}(-1)^{\Tr_t(y^{1+2^{i+1}}\theta+ay\theta)},
\end{align*}
where the last sum is taken over
\[\{y\in L\,|\,y\theta+(y\theta)^{2+2^{i+1}}+(b\theta)^{2^{i+1}}=0\}.\]
After a change of variable, we have
\begin{align}\label{wspectrum}
W_d(a+b\bar{c})&=2^t\sum_{z\in S_b}(-1)^{\Tr_t(z^{1+2^{i+1}}\theta^{-2^{i+1}}+az)},
\end{align}
where
\[
S_b:=\{z\in L\,|\,z+z^{2+2^{i+1}}+(b\theta)^{2^{i+1}}=0\}.
\]

When $b=0$, we have $S_0=\{0,1\}$ since $\gcd(2^{i+1}+1, 2^t-1)=1$. It follows that 
\[
W_d(a)=2^t(1+(-1)^{\Tr_t(\theta^{-1}+a)}),\quad \forall a\in L.
\]
Choosing $a=\theta^{-1}$,  we have $W_d(\theta^{-1})=2^{t+1}$. Thus we have proved the following:

\begin{thm}Conjecture \ref{conj1} holds when $d$ is of the form $1+2^i+2^{i+t}$, $0<i<t-1$, and $\gcd(d, 2^m-1)=1$.
\end{thm}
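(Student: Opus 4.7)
The plan is to exploit the formula for $W_d(a+b\bar c)$ derived in (\ref{wspectrum}): it becomes trivial to analyze when $b=0$, yet already produces a single Walsh coefficient large enough to confirm the conjecture. So I would set $b=0$ and then look for a convenient $a\in L$.

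When $b=0$, the set $S_0$ of $z\in L$ satisfying $z+z^{2+2^{i+1}}=0$ is determined by $z=0$ together with $z^{1+2^{i+1}}=1$. Since $d=1+2^i+2^{i+t}\equiv 1+2^{i+1}\pmod{2^t-1}$, the hypothesis $\gcd(d,2^m-1)=1$ forces $\gcd(1+2^{i+1},2^t-1)=1$, so $z^{1+2^{i+1}}=1$ has the unique solution $z=1$ in $L$. Thus $S_0=\{0,1\}$ and (\ref{wspectrum}) collapses to
\[
W_d(a)=2^t\bigl(1+(-1)^{\Tr_t(\theta^{-1}+a)}\bigr),\qquad a\in L.
\]

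Next I would choose $a=\theta^{-1}$, which lies in $L^*$ because $\theta=c+\bar c\neq 0$ (as $c\notin L$). This substitution makes the exponent $\Tr_t(\theta^{-1}+a)$ equal to $\Tr_t(0)=0$, yielding $W_d(\theta^{-1})=2^{t+1}$. By the relation (\ref{relation}), a nonzero $a\in \F$ with $W_d(a)\geq 2^{t+1}$ corresponds to a nonzero codeword of $\C_d$ of weight at most $2^{m-1}-2^t$, which is exactly the assertion of Conjecture \ref{conj1}.

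The only real obstacle is the coprimality reduction $\gcd(2^{i+1}+1,2^t-1)=1$; this falls out of the hypothesis and the congruence $d\equiv 1+2^{i+1}\pmod{2^t-1}$, in the same spirit as Lemma \ref{powerof2}. Everything else amounts to a direct substitution in the Walsh formula already established in the preceding calculation.
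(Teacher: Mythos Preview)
Your proof is correct and follows essentially the same route as the paper: set $b=0$ in (\ref{wspectrum}), use $\gcd(2^{i+1}+1,2^t-1)=1$ to get $S_0=\{0,1\}$, and then take $a=\theta^{-1}$ to force $W_d(a)=2^{t+1}$. The only step you leave implicit---also unstated in the paper---is the passage from the exponent $\Tr_t(\theta^{-2^{i+1}}+a)$ that literally appears in (\ref{wspectrum}) to $\Tr_t(\theta^{-1}+a)$; this is harmless since $\Tr_t$ is invariant under the Frobenius, so $\Tr_t(\theta^{-2^{i+1}})=\Tr_t(\theta^{-1})$.
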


In the case where $b\neq 0$, we need to solve the equation
\[
z+z^{2^{i+1}+2}=w, \quad z\in L,
\]
for each $w\in L^*$. For general $i$, $0<i<t-1$, the solutions are complicated. We will consider the $i=1$ case below.

From now on, we assume that $i=1$ (so $d=3+2^{t+1}$). By Lemma~\ref{powerof2}, $v_2(t)\leq 1$; that is, either $t$ is odd or $t\equiv 2\pmod 4$. The equation we need to consder is now $z^6+z=w$, $z\in L$ and $w\in L^*$.

% Multiplying both sides  with $z^2$, we get $z^8+z^3=wz^2$. Raising to square, we get %$z^{16}+z+w=w^2z^4$. This equation is $\F_4$-linear, and hence the $\F_2$-linear span of all solutions %have dimension at most $4$.\\

Assume that $z_0\in L^*$ is a solution to $z^6+z=w$, $w\in L^*$. Suppose $z_0+x$ is another solution with $x\in L^*$. Now expanding $(z_0+x)^6+z_0+x=w$ gives
\[
\left(\frac{x}{z_0}\right)^5+\left(\frac{x}{z_0}\right)^3+\left(\frac{x}{z_0}\right)=\frac{1}{z_0^5}.
\]
The polynomial $X^5+X^3+X\in \F_2[X]$ is the Dickson polynomial $D_5(X,1)$. For convenience of the reader, we include the definition of general Dickson polynomials here. Let $a\in \F_q$ (here $q$ is an arbitrary prime power) and let $n$ be a positive integer. We define the {\it
Dickson polynomial} $D_n(X,a)$ over $\F_q$ by
$$D_n(X,a)=\sum_{j=0}^{\lfloor n/2\rfloor}\frac {n}{n-j}{n-j\choose j}(-a)^jX^{n-2j}.$$
It is well known \cite{dk} that the Dickson polynomial $D_n(X,a)$, $a\in \F_q^*$, is a permutation polynomial of $\F_q$ if and only if $\gcd(n,q^2-1)=1$. For more details about Dickson polynomials, we refer the reader to \cite{dk}.

We are now ready to determine the Walsh spectrum of $\Tr(x^d)$ in the case where $m=2t$, $t$ is odd, and $d=3+2^{t+1}$.

\begin{thm}\label{todd} 
Let $m=2t$ be a positive integer with $t$ odd, and $d=3+2^{t+1}$. The Walsh spectrum of $\Tr(x^{d})$ over $\F=\F_{2^m}$ is given as follows.
\begin{table*}[ht]
\begin{center}
\caption{}\label{tab-comp}
\begin{tabular}{c| c}\hline
$W_d(\cdot)$ & multiplicity\\\hline
$0$ & $3\cdot 2^{2t-2}$\\
$2^{t+1}$& $2^{2t-3}+2^{t-2}$\\
$-2^{t+1}$& $2^{2t-3}-2^{t-2}$\\\hline
\end{tabular}
\end{center}
\end{table*}
\end{thm}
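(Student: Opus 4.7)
The plan is to show that under the hypotheses $t$ odd and $d=3+2^{t+1}$, the Walsh coefficient $W_d$ takes only the three values $0$, $2^{t+1}$, $-2^{t+1}$, and then to determine the multiplicities by applying the first two moments from Lemma \ref{known}.

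For the three-valuedness I would use the formula (\ref{wspectrum}), writing each $x\in\F$ uniquely as $a+b\bar c$ with $a,b\in L$. The $b=0$ case is already handled in the discussion preceding the theorem: $W_d(a)\in\{0,2^{t+1}\}$ since $S_0=\{0,1\}$. For $b\neq 0$, formula (\ref{wspectrum}) reduces the problem to bounding $|S_b|$, where now $S_b=\{z\in L:z+z^6=(b\theta)^4\}$, because as soon as $|S_b|\le 2$ the inner sum of two $\pm 1$'s lies in $\{0,\pm 2\}$ and so $W_d(a+b\bar c)\in\{0,\pm 2^{t+1}\}$.

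The key step is then the Dickson-polynomial argument already sketched in the paper. If $z_0$ is a solution and $z_0+x$ is another (with $x\ne 0$), then $y:=x/z_0$ satisfies $D_5(y,1)=1/z_0^5$ with $D_5(X,1)=X^5+X^3+X$. For $t$ odd one checks directly that $2^{2t}\equiv -1\pmod 5$, hence $\gcd(5,2^{2t}-1)=1$; by the permutation criterion for Dickson polynomials stated in the paper, $D_5(X,1)$ is a permutation polynomial of $L$. Since $D_5(0,1)=0\neq 1/z_0^5$, there is exactly one nonzero $y$ solving the equation, so $|S_b|\in\{0,2\}$ for every $b\neq 0$. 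Consequently $W_d$ is three-valued, taking only the values $0$ and $\pm 2^{t+1}$.

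To finish, write $N_0,N_+,N_-$ for the multiplicities of $0,2^{t+1},-2^{t+1}$. Lemma \ref{known} gives the linear system
\begin{align*}
N_0+N_++N_-&=2^{2t},\\
2^{t+1}(N_+-N_-)&=2^{2t},\\
2^{2t+2}(N_++N_-)&=2^{4t},
\end{align*}
whose unique solution is $N_+=2^{2t-3}+2^{t-2}$, $N_-=2^{2t-3}-2^{t-2}$, $N_0=3\cdot 2^{2t-2}$, matching Table \ref{tab-comp}. The most delicate point in this plan is the Dickson permutation step, but thanks to the identification $X^5+X^3+X=D_5(X,1)$ already noted in the paper and the arithmetic fact that $5\nmid 2^{2t}-1$ for odd $t$, that step is short; everything else is elementary bookkeeping with the two moment identities.
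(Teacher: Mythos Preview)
Your proposal is correct and follows essentially the same route as the paper: reduce via (\ref{wspectrum}) to the size of $S_b$, use that $D_5(X,1)=X^5+X^3+X$ is a permutation of $L$ when $\gcd(5,2^{2t}-1)=1$ (which holds for $t$ odd) to get $|S_b|\in\{0,2\}$, hence three-valuedness, and then solve for the multiplicities from the first two moments in Lemma~\ref{known}. The only extra detail you supply beyond the paper is the explicit check $D_5(0,1)=0\neq 1/z_0^5$, which is a harmless refinement.
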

\begin{proof} 
We have observed that $X^5+X^3+X$ is the Dickson polynomial $D_5(X,1)$. If $t$ is odd, then $\gcd(5, 2^{2t}-1)=1$; consequently $D_5(X,1)$ induces a permutation over $L=\F_{2^t}$. Hence by the computations that we did above, $|S_b|=0$ or $2$ when $t$ is odd and $b\neq 0$. We already saw that $S_0=\{0,1\}$. It follows that $W_d(a+b\bar{c})$, $a,b\in L$, take three values only: $0$, $\pm 2^{t+1}$.   Now denote by $N_0,N_+,N_-$ the multiplicity of  $0$, $2^{t+1}$, $-2^{t+1}$ in the Walsh spectrum of $\Tr(x^d)$, respectively. From part (1) of Lemma \ref{known}, we have
\begin{align*}
N_0+N_++N_-=2^{2t},\quad 2^{t+1}N_+-2^{t+1}N_-=2^{2t},\quad 2^{2t+2}N_++2^{2t+2}N_-=2^{4t}.
\end{align*}
Solving this system of equations, we get
\[
N_0=2^{2t}-2^{2t-2},\quad N_+=2^{2t-3}+2^{t-2},\quad N_-=2^{2t-3}-2^{t-2}.
\]
\end{proof}

\noindent{\bf Remarks.} (1). Let $t$ be an odd positive integer. The fact that $z^6+z=w$, $w\in \F_{2^t}$, has $0$ or $2$ solutions in $L$ is equivalent to the fact that $D(6)=\{(1,x,x^6)\mid x\in \F_{2^t}\}\cup \{(0,1,0), (0,0,1)\}$ is a hyperoval in $PG(2,2^t)$. See \cite{ehkx} for more details.

(2). Theorem~\ref{todd} was first proved in \cite{CD} by a slightly different argument.
\vspace{0.1in} 

Next we consider the case where $d=3+2^{t+1}$ and $t\equiv 2\pmod 4$.
 
\begin{thm}\label{even} Let $m=2t$ be a positive integer with $v_2(t)=1$, $t\geq 6$, and $d=3+2^{t+1}$.  The Walsh spectrum of $\Tr(x^{d})$ over $\F=\F_{2^m}$ is given as follows.
\begin{table*}[ht]
\begin{center}
\caption{}\label{tab-comp}
\begin{tabular}{c| c}\hline
$W_d(\cdot)$ & multiplicity\\\hline
$0$ & $2^{2t-1}-2^{2t-5}-2^{t-1}+2^{t-3}$\\
$2^t$ & $\frac{2^{2t}+2^t}{5}$\\
 $-2^t$ & $\frac{2^{2t}+2^t}{5}$ \\
$2^{t+1} $& $ 2^{2t-4}+2^{t-2}$\\
$-2^{t+1} $& $ 2^{2t-4}-2^{t-2}$\\
$2^{t+2}$ & $\frac{2^{2t-6}-2^{t-4}}{5}$\\
$-2^{t+2}$ &$\frac{2^{2t-6}-2^{t-4}}{5}$\\\hline
\end{tabular}
\end{center}
\end{table*}
\end{thm}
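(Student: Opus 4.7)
The plan is to extend the argument of Theorem~\ref{todd}, the main new difficulty being that for $t\equiv 2\pmod 4$ the Dickson polynomial $D_5(X,1)$ is no longer a permutation of $L$: indeed $5\mid 2^t+1$ but $5\nmid 2^t-1$, so $\gcd(5,2^{2t}-1)=5$. Starting from equation~(\ref{wspectrum}) with $i=1$, I would again sort $b\in L$ according to $|S_b|$, where $S_b=\{z\in L:z+z^6=(b\theta)^4\}$, and assemble per-class $a$-distributions.

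Using the standard identity $D_5(X+X^{-1},1)=X^5+X^{-5}$, I would classify the preimages under $D_5\colon L\to L$ of any $v\in L^*$ according to (i)~whether $\Tr_t(v^{-1})$ equals $0$ or $1$, and (ii)~whether the element $\nu$ of the order-$(2^t+1)$ subgroup $U:=\{x\in\F_{2^{2t}}^*:x^{2^t+1}=1\}$ with $\nu+\nu^{-1}=v$ lies in the index-$5$ subgroup $U^5$. The three cases yield fiber sizes $1$, $0$, $5$ respectively. Since for nonzero $z_0\in S_b$ the remaining elements of $S_b$ are exactly $z_0(1+y)$ with $y\ne 0$ satisfying $D_5(y,1)=z_0^{-5}$, this forces $|S_b|\in\{0,1,2,6\}$; counting via the bijection $b\mapsto (b\theta)^4$ on $L^*$ gives
\[
m_0=\tfrac{2^t-1}{3},\quad m_1=\tfrac{2(2^t+1)}{5},\quad m_2=2^{t-2}-1,\quad m_6=\tfrac{2^{t-1}-2}{30},
\]
all nonnegative integers precisely because $t\equiv 2\pmod 4$ with $t\ge 6$, and summing to $2^t-1$.

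For $b\in L^*$ with $|S_b|\in\{0,1,2\}$, the $a$-distribution of $W_d(a+b\bar{c})$ follows from the $\F_2$-independence of the distinct nonzero $z\in S_b$ via Lemma~\ref{known}. The heart of the proof is $|S_b|=6$. Newton's identities applied to $z^6+z+(b\theta)^4=0$ yield $\sum_{z\in S_b}z^5=1$, so writing $T_b(a):=2^{-t}W_d(a+b\bar{c})$ as a signed character sum, the shifted Boolean vector $w=\bigl(\Tr_t(az)+\Tr_t(z^5\theta^{-4})\bigr)_{z\in S_b}\in\F_2^6$ satisfies $\sum_i w_i\equiv\Tr_t(\theta^{-4})=\Tr_t(\theta^{-1})\pmod 2$. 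Now $c\in U\setminus\{1\}$ is a root of $X^2+\theta X+1\in L[X]$ lying outside $L$, which forces $\Tr_t(1/\theta^2)=1$ and hence $\Tr_t(\theta^{-1})=1$; therefore $|w|$ is odd for every $a$, so $T_b(a)=6-2|w|\in\{-4,0,4\}$. The two moment identities $\sum_a T_b(a)=0$ and $\sum_a T_b(a)^2=6\cdot 2^t$ then pin the per-$b$ distribution to $3\cdot 2^{t-4}$ at each of $\pm 4$ and $5\cdot 2^{t-3}$ at $0$.

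Aggregating contributions across all $b$, together with the already-derived $b=0$ contribution $S_0=\{0,1\}$ (producing $2^{t-1}$ zeros and $2^{t-1}$ copies of $2^{t+1}$), recovers the multiplicities in the table. The main obstacle I foresee is the fiber-size classification of $D_5|_L$: showing cleanly, via the norm structure on $U$ and the distinction between $U^5$ and its complement, when the relevant preimage equation has $5$ solutions in $L$ versus $0$. The parity lemma $\Tr_t(\theta^{-1})=1$ is short but essential, since the opposite parity would permit $T_b=\pm 6$ when $|S_b|=6$ and create spurious Walsh values $\pm 3\cdot 2^{t+1}$ outside the claimed spectrum.
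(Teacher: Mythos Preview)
Your outline is correct and, for the six–root case, cleaner than the paper's argument. Both proofs start from (\ref{wspectrum}) with $i=1$, reduce to analyzing $S_b=\{z\in L:z+z^6=(b\theta)^4\}$, and obtain $|S_b|\in\{0,1,2,6\}$ with the same class sizes $m_0,m_1,m_2,m_6$ you list. The substantive difference is in handling $|S_b|=6$. The paper (Sections~3.2--3.3) fixes $c$ of order $5$ (so $\theta^3=1$), rewrites the five difference conditions $w_j-w_0$ as $\Tr_4(\gamma^jU)=V+\Tr_4(\gamma^{3j}W)$, sums over $j$ to force $V=\Tr_t(\theta^{-1})=0$ and reach a contradiction, and then carries out a fairly heavy explicit trace computation to count the $a$'s giving $\pm2^{t+2}$, followed by a separate construction to prove $N_4=N_{-4}$. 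Your route—Newton's identities on $z^6+z+(b\theta)^4$ to get $\sum z_j=0$, $\sum z_j^5=1$, hence $\sum_j w_j\equiv\Tr_t(\theta^{-1})=1$ so $|w|$ is odd—immediately yields $T_b(a)\in\{-4,0,4\}$, and then the two per-$b$ moments $\sum_a T_b(a)=0$, $\sum_a T_b(a)^2=6\cdot 2^t$ (valid since all $z_j\ne 0$ are distinct) pin down the distribution without any of the $\Tr_{m/4}$ calculations or the special choice of $c$. This also lets you aggregate class by class rather than invoking the global power sums of Lemma~\ref{known} at the end, as the paper does for $N_0,N_{\pm2}$. One small remark: your appeal to ``Lemma~\ref{known}'' for the $|S_b|\le 2$ cases is a mis-citation—that lemma concerns global Walsh sums—though the intended argument (equidistribution of $(\Tr_t(az_1),\Tr_t(az_2))$ from the $\F_2$-independence of the nonzero $z_j$) is elementary and correct.
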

\noindent
The remaining part of this paper is devoted to the proof of Theorem \ref{even}. From now on we always assume that $v_2(t)=1$ and $t\geq 6$. Let $G:=\{x\in \F\mid x^{2^t+1}=1\}$. Furthermore we will assume that the element $c$ used in (\ref{wspectrum}) to have order $5$. Since $t\equiv 2 \pmod 4$ by assumption, we have $5|(2^t+1)$. Thus $c^{2^t+1}=1$, i.e., $c\in G$ (and $c\not\in L$).
 
\begin{lemma}\label{size}
Let $w\in L^*$. Then the number of solutions $z\in L$ to 
$$z^6+z=w$$
is $0$, $1$, $2$ or $6$.
\end{lemma}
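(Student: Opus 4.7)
The plan is to reduce Lemma~\ref{size} to a fiber-counting problem for the Dickson polynomial $D_5(X,1)=X^5+X^3+X$ on $L$, and then resolve the count via the substitution $y=u+u^{-1}$.

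The reduction is already laid out in the paragraph preceding the statement: if $z_0\in L^*$ is any solution of $z^6+z=w$, then every other solution has the form $z_0+x$ with $x\in L^*$, and the change of variable $y=x/z_0$ turns the equation into $D_5(y,1)=1/z_0^5$. Hence the number of solutions of $z^6+z=w$ in $L$ equals $0$ (if no $z_0$ exists) or $1+N$, where $N$ is the cardinality of the fiber $D_5(\cdot,1)^{-1}(1/z_0^5)\cap L$. It therefore suffices to prove that every nonzero fiber of $D_5(\cdot,1)\colon L\to L$ has size $0$, $1$, or $5$.

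To analyse these fibers I will parametrize $y\in L^*$ as $y=u+u^{-1}$, where $u$ is a root of $U^2+yU+1$. Standard theory of quadratics over $\F_{2^t}$ shows that the two roots $u, u^{-1}$ lie in $L^*$ if $\Tr_t(1/y)=0$, and otherwise lie in $G\setminus\{1\}$ (so that $u^{2^t}=u^{-1}$), where $G=\{x\in\F:x^{2^t+1}=1\}$ as in the excerpt. Under this substitution $D_5(y,1)=u^5+u^{-5}$, so the equation $D_5(y,1)=v$ becomes $u^{10}+vu^5+1=0$, i.e.\ $u^5\in\{s,s^{-1}\}$ where $s$ is a root of $T^2+vT+1$. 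Furthermore $s,s^{-1}\in L^*$ if $\Tr_t(1/v)=0$ and $s,s^{-1}\in G\setminus\{1\}$ if $\Tr_t(1/v)=1$ (the exclusion of $1$ using $v\neq 0$).

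The arithmetic ingredient is that under the hypothesis $v_2(t)=1$ we have $5\nmid 2^t-1$ (since $4\nmid t$) but $5\mid 2^t+1$. Consequently the fifth-power map is a bijection on $L^*$ and is $5$-to-$1$ onto its image in $G$. In the first case $\Tr_t(1/v)=0$, there is a unique $u\in L^*$ with $u^5=s$, and its inverse is the unique solution of $u^5=s^{-1}$; these two produce the same value of $y$, so the fiber has size exactly $1$. In the second case $\Tr_t(1/v)=1$, the equation $u^5=s$ has either $0$ or $5$ solutions in $G$, depending on whether $s\in G^5$; the same alternative holds for $u^5=s^{-1}$ and occurs simultaneously because $G^5$ is closed under inversion. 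When solutions exist, the resulting ten $u$'s are interchanged by $u\mapsto u^{-1}$ and contain no fixed point of inversion (since $u=u^{-1}$ forces $u=1$ in the odd-order group $G$, hence $s=1$ and $v=0$, excluded), so they collapse to exactly $5$ distinct values of $y$. This gives $N\in\{0,1,5\}$, and hence $|S|\in\{0,1,2,6\}$. The main delicate point is verifying in the second case that the $u^5=s$ and $u^5=s^{-1}$ solution sets are swapped by inversion without self-collapse, so that ten $u$-values really do descend to five $y$-values rather than fewer.
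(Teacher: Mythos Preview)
Your proof is correct and follows essentially the same route as the paper's: both reduce to counting the $L$-fiber of $D_5(\cdot,1)$ over $v=1/z_0^5$ via the substitution $y=u+u^{-1}$ with $u\in L^*\cup G$, and both hinge on the arithmetic fact that under $v_2(t)=1$ one has $5\nmid 2^t-1$ but $5\mid 2^t+1$. The paper phrases the count through the action of fifth roots of unity $\gamma$ on $u$, whereas you phrase it by solving $u^5=s$ directly; these are the same computation. One small point you leave implicit in the case $\Tr_t(1/v)=0$: you should note that no $u\in G$ can satisfy $u^5=s$ there, since $u^5\in G$ would force $s\in G\cap L^*=\{1\}$ and hence $v=0$; with that remark your fiber count of exactly $1$ is fully justified.
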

\begin{proof} The main difference from the $t$ odd case is that $X^5+X^3+X\in \F_2[X]$ no longer induces a permutation of $L=\F_{2^t}$ when $t\equiv 2\pmod 4$. We start in the same way as before. Assume that $z_0\in L^*$ is a solution to $z^6+z=w$, $w\in L^*$. Suppose $z_0+x$ is another solution with $x\in L^*$. Then expanding $(z_0+x)^6+z_0+x=w$ gives 
\begin{eqnarray}\label{degree5}
\left(\frac{x}{z_0}\right)^5+\left(\frac{x}{z_0}\right)^3+\left(\frac{x}{z_0}\right)=\frac{1}{z_0^5},
\end{eqnarray}
which has 0, $1$, or $5$ solutions in $L$ when $v_2(t)=1$ and $t\geq 6$. This can be seen as follows.

It is well known that each element $y$ of $L^*$  can be written in the form $u+\frac{1}{u}$, with $u\in L^*$ or $u\in G$, according as $\Tr_t(1/y)$ is equal to $0$ or $1$ (see \cite{dk}). Now if $ u+\frac{1}{u}\in L$ is a solution to (\ref{degree5}), then so are $ \gamma u+\frac{1}{\gamma u}$, $\gamma\in \F^*$ and $\gamma^5=1$, since $D_5(u+\frac{1}{u},1)=u^5+\frac{1}{u^5}$. When $u\in L^*$, $\gamma u+\frac{1}{\gamma u}$ is in $L$ if and only if $\gamma=1$.  When $u\in G$,  any choice of $\gamma$ ($\gamma^5=1$) will give $\gamma x+\frac{1}{\gamma x}\in L$. This proves the claim that (\ref{degree5}) has $0$, $1$ or $5$ solutions in $L$. The conclusion of the lemma follows as a consequence.
\end{proof}

From Lemma~\ref{size} and (\ref{wspectrum}), we see that the Walsh coefficients of $\Tr(x^{3+2^{t+1}})$ are in $\{\pm i\cdot 2^{t}\mid i=0, 1, 2, 4, 6\}$. We use $N_i$ to denote the number of $a+b\bar{c}\in \F$ such that $W_d(a+b\bar{c})=i\cdot 2^{t}$, for $i\in\{0,\pm1,\pm 2,\pm 4,\pm 6\}$.

\subsection{} 
 Now, we examine for which $w\in L^*$, $z^6+z=w$, has six solutions in $L$. Assume that $z_0$ and $x$ are as in the proof of Lemma \ref{size}. By the above analysis, there exists $u\in G$ such that $\frac{x}{z_0}=u+\frac{1}{u}$, and $\frac{1}{z_0^5}=u^5+\frac{1}{u^5}$, i.e., $z_0^5=\frac{1}{u^{-5}+u^5}$. Since $\gcd(5,2^t-1)=1$, we get $z_0=\frac{1}{(u^{-5}+u^5)^{1/5}}$. The other five solutions are
\[
\frac{1}{(u^{-5}+u^5)^{1/5}}\left(1+u\gamma+\frac{1}{u\gamma}\right),\quad\gamma^5=1.
\]
Therefore, $z^6+z=w$, $w\in L^*$,  has six solutions in $L$ if and only if $w$ is in the following set
\[
T_6:=\{z^6+z\,|\,z=\frac{1}{(u^{-5}+u^5)^{1/5}},\, u\in G,\,u^5\neq 1\}.
\]
The set $T_6$ has size $\frac{2^t+1-5}{5\cdot 2\cdot 6}=\frac{2^{t-2}-1}{15}$ : the factor $5$ in the denominator comes from the fact that $u\mapsto u^5$ is $5$-to-$1$ on $G$;  the factor $6$  comes from the fact $z\mapsto z^6+z$ is $6$-to-$1$ on the set in consideration; and the factor $2$ comes from the fact $u$ and $u^{-1}$ give the same element. In this case, with $(b\theta)^4=w$, $W_d(a+b\bar{c})\in\{\pm i\cdot 2^{t}\mid  i=0, 2, 4, 6\}$. \\

Next, we examine for which $w\in L$, $z^6+z=w$ has two solutions in $L$. Clearly, when $w=0$, this equation has two solutions in $L$. So in what follows we consider the case where $w\neq 0$. Assume that $z_0$ and $x$ are as in the proof of Lemma \ref{size}. By the same analysis, there exists $u\in L^*$ such that $\frac{x}{z_0}=u+\frac{1}{u}$, and $\frac{1}{z_0^5}=u^5+\frac{1}{u^5}$, i.e., $z_0^5=\frac{1}{u^{-5}+u^5}$.  Therefore, $z^6+z=w$, $w\in L$,  has two solutions in $L$ if and only if $w$ is in the following set
\[
T_2:=\{z^6+z\,|\,z=\frac{1}{(u^{-5}+u^5)^{1/5}},\, u\in L\setminus\F_4\}\cup\{0\}.
\]
The set $T_2$ has size $\frac{2^t-4}{2\cdot 2}+1=2^{t-2}$. In this case,  with $(b\theta)^4=w$, $W_d(a+b\bar{c})\in\{\pm i\cdot 2^{t}: i=0, 2\}$.\\

It now follows that there are $2^t-2\cdot 2^{t-2}-6\cdot \frac{2^t-4}{60}=\frac{2^{t+1}+2}{5}$ elements $w\in L$ such that $z^6+z=w$ has only one solution in $L$. Only these $w$ will give the values $W_d(a+b\bar{c})=\pm 2^{t}$ (again with $(b\theta)^4=w$). We observe that the two values, $2^t$ and $-2^t$, occur for equally many $a\in L$, since for the unique solution $z_0\in L^*$ to $z^6+z=w$, half of the $a$'s in $L$ satisfy $\Tr_t(az_0)=0$ and the other half satisfy $\Tr_t(az_0)=1$. Therefore we have
\[
N_1=N_{-1}=2^{t-1}\cdot\frac{2^{t+1}+2}{5}=\frac{2^{2t}+2^t}{5}.
\]

Finally we note that the number of $w\in L$ such that  $z^6+z=w$ has no solutions in $L$ at all is equal to $2^{t}-\frac{2^{t-2}-1}{15}-2^{t-2}-\frac{2^{t+1}+2}{5}=\frac{2^t-1}{3}$.

\subsection{}
We now show that $W_d(a+b\bar{c})\neq \pm 6\cdot 2^{t}$ for all $a,b\in L$. As seen above, only when $z^6+z=w$, $w=(b\theta)^4\in L^*$, has 6 solutions in $L$, could $W_d(a+b\bar{c})$ possibly be equal to $\pm 6\cdot 2^{t}$. Let $z_0=\frac{1}{(u^{-5}+u^5)^{1/5}}\in L^*$, $u\in G$, be a solution to $z^6+z=w$, $w=(b\theta)^{4}\in L^*$. The other five solutions are $z_j=z_0+x_j\in L$, with $\frac{x_j}{z_0}=u\gamma^j+\frac{1}{u\gamma^j}$, $1\leq j\leq 5$,  $o(\gamma)=5$, $u\in G$. The fact that $\pm 6\cdot 2^{t}$ won't occur as Walsh coefficients of $\Tr(x^d)$ amounts to the fact that the following system of equations does not have a solution $a\in L$:
\[
\Tr_t(z_j^5\theta^{-4}+az_j)=\Tr_t(z_0^5\theta^{-4}+az_0),\quad 1\leq j\leq 5.
\]
We will prove the latter fact by way of contradiction. Assume that the above system has a solution $a\in L$. With $z_j=x_j+z_0$, we get
\[
\Tr_t\left(x_j\left(z_0^4\theta^{-4}+z_0^{2^{t-2}}\theta^{-1}+a\right)\right)=\Tr_t(x_j^5\theta^{-4}),\quad 1\leq j\leq 5.
\]
Since $\frac{x_j}{z_0}=u\gamma^j+\frac{1}{u\gamma^j}=\Tr_{m/t}(u\gamma^j)$, we have
\[
\Tr_{m}\left(u\gamma^j z_0\left(z_0^4\theta^{-4}+z_0^{2^{t-2}}\theta^{-1}+a\right)\right)=\Tr_{m}\left(\left(u^5+u^3\gamma^{3j}\right)z_0^5\theta^{-4}\right),\quad 1\leq j\leq 5.
\]
Now, we rewrite the above equations as
\[
\Tr_{4}\left(\gamma^jU\right)=V+\Tr_{4}\left(\gamma^{3j}W\right),\quad 1\leq j\leq 5.
\]
where
\begin{align*}
U:&=\Tr_{m/4}( uz_0(z_0^4\theta^{-4}+z_0^{2^{t-2}}\theta^{-1}+a))=\Tr_{m/4}\left( \frac{u}{u^5+u^{-5}}\theta^{-4}+\frac{u}{(u^5+u^{-5})^{1/4}}\theta^{-1}+uz_0a\right),\\
V:&=\Tr_{m}(u^5z_0^5\theta^{-4})=\Tr_{m}\left(\frac{u^5}{u^5+u^{-5}}\theta^{-4}\right)=\Tr_t(\theta^{-1}),\\
W:&=\Tr_{m/4}(u^3z_0^5\theta^{-4})=\Tr_{m/4}\left(\frac{u^3}{u^5+u^{-5}}\theta^{-4}\right).
\end{align*}

Taking summation of the above equations over $1\leq j\leq 5$, we get $V=0$. However, as we stated before, $\Tr_t(\theta^{-1})=1$ since $\theta=c+c^{-1}$ with $c\in G$. This contradiction completes the proof. \\

\subsection{}
(1) We now compute $N_4$ and $N_{-4}$. As we have seen above, $W_d(a+b\bar{c})=\pm 2^{t+2}$ if and only if $z^6+z=w$, $w=(b\theta)^4\in L^*$, has 6 solutions in $L$, and for some $i_0\in\{0,1,\ldots,5\}$ the following equations hold:
\[
\Tr_t(z_j^5\theta^{-4}+az_j)=\Tr_t(z_{i_0}^5\theta^{-4}+az_{i_0})+1,\quad 0\leq j\leq 5,\,j\neq i_0.
\]
Without loss of generality we may assume that $i_0=0$. Similar to the above computations, we can rewrite the above equations as
\[
\Tr_{4}\left(\gamma^jU\right)=\Tr_{4}\left(\gamma^{3j}W\right),\quad 1\leq j\leq 5,
\]
where $U,W$ are the same as above. It follows that
\[\Tr_{4}\left(\gamma^jU\right)=\Tr_{4}\left(\gamma^{j}W^2\right),\quad 1\leq j\leq 5.\]
Since $\gamma^j$, $1\leq j\leq 5$, span $\F_{2^4}$, we obtain that $U=W^2$, i.e.,
\begin{align*}
\Tr_{m/4}(uz_0a)
   &=\Tr_{m/4}\left( \frac{u}{(u^5+u^{-5})^{1/5}}a\right)\\
  &=\Tr_{m/4}\left(\frac{u}{u^5+u^{-5}}\theta^{-4}+\frac{u}{(u^5+u^{-5})^{1/4}}\theta^{-1}+\frac{u^6}{u^{10}+u^{-{10}}}\theta^{-8}\right)
\end{align*}
By assumption $c$ has order $5$, it follows that $\theta=c+\bar{c}$ has order $3$. We have
\begin{align*}
\Tr_{m/4}(uz_0a)
 &=\Tr_{m/4}\left(\frac{u}{u^5+u^{-5}}\theta^{2}+\frac{u}{(u^5+u^{-5})^{1/4}}\theta^{2}+\frac{u^6}{u^{10}+u^{-{10}}}(\theta^2+1)\right)\\
  &=\theta^2\Tr_{m/4}\left(\frac{u}{u^5+u^{-5}}+\frac{u^{16}}{u^{20}+u^{-20}}+\frac{u^6}{u^{10}+u^{-10}}\right)+\Tr_{m/4}\left(\frac{u^6}{u^{10}+u^{-10}}\right)\\
  &=\theta^2\Tr_{m/4}\left(\frac{u}{u^5+u^{-5}}+\frac{u^{-4}}{u^{20}+u^{-20}}\right)+\Tr_{m/4}\left(\frac{u^3}{u^{5}+u^{-5}}\right)^2\\
  &=\theta^2\Tr_{m/4}\left(\frac{u+u^{-1}}{u^5+u^{-5}}\right)+\theta^2\Tr_{m/2}\left(\frac{u^{-1}}{u^{5}+u^{-5}}\right)+\Tr_{m/4}\left(\frac{u^3}{u^{5}+u^{-5}}\right)^2\\
  &=\theta^2\Tr_{t/2}\left(\frac{u+u^{-1}}{u^5+u^{-5}}\right)+\theta^2\Tr_{t/2}\left(\frac{u+u^{-1}}{u^{5}+u^{-5}}\right)+\Tr_{m/4}\left(\frac{u^3}{u^{5}+u^{-5}}\right)^2\\
  &=\Tr_{m/4}\left(\frac{u^3}{u^{5}+u^{-5}}\right)^2.
\end{align*}
  
Conversely, if $\Tr_{m/4}(uz_0a)=\Tr_{m/4}\left(\frac{u^3}{u^{5}+u^{-5}}\right)^2$, $a\in L$, and $z^6+z=w$, $w=(b\theta)^4\in L^*$, has 6 solutions in $L$, then $W_d(a+b\bar{c})=\pm 2^{t+2}$.

Below we will count the number of solutions to
\begin{eqnarray}\label{trace}
\Tr_{m/4}(uz_0a)=\Tr_{m/4}\left(\frac{u^3}{u^{5}+u^{-5}}\right)^2,\,a\in L.
\end{eqnarray}
Write $\Tr_{m/4}\left(\frac{u^3}{u^{5}+u^{-5}}\right)^2=h+g\gamma$ with $h,g\in\F_{2^2}$ and
\[
uz_0=\frac{u}{(u^5+u^{-5})^{1/5}}=\alpha+\beta\gamma,\quad \alpha,\beta\in L=\F_{2^t}, \,o(\gamma)=5.
\]
We claim that $\alpha/\beta\not\in\F_4^*$. Otherwise, $u$ is in $\F_{2^4}^*\cdot \F_{2^t}^*$ and thus has order dividing ${\rm lcm}(15,2^t-1)=5(2^t-1)$. Noting that $u$ has order dividing $2^t+1$, we have $u^5=1$, which is a contradiction. Now (\ref{trace}) becomes
$\Tr_{m/4}(\alpha a)+\Tr_{m/4}(\beta a)\gamma=h+g\gamma$, that is,
\[
\Tr_{t/2}(\alpha a)=h,\quad \Tr_{t/2}(\beta a)=g.
\]
Since $\alpha/\beta\not\in \F_4^*$, this system of equations clearly has $2^{t-4}$ solutions $a\in L$.

We thus have
\[
N_4+N_{-4}=6\cdot2^{t-4}\cdot\frac{2^{t-2}-1}{15}=\frac{2^{2t-5}-2^{t-3}}{5}.
\]
\vspace{0.1in}

\noindent (2) Let $b\in L^*$ be such that $z^6+z=w$, $w=(b\theta)^4\in L^*$, has 6 solutions in $L$. Assume that the six solutions are $z_j$, $0\leq j\leq 5$, as given above. We claim that for each $i_0\in \{0,1,\ldots ,5\}$ there exists an $x\in L$ such that
\begin{eqnarray}\label{system}
\Tr_{m/4}\left( uz_{i_0}x\right)=0,\; \Tr_t(z_jx)=1,\quad\forall j, \,0\leq j\leq 5.
\end{eqnarray}
An immediate consequence is that $N_4=N_{-4}$; this can be seen as follows: If $W_d(a+b\bar{c})=4\cdot 2^t$, $a,b\in L$, then $W_d(x+a+b\bar{c})=-4\cdot 2^t$ since every term in the sum on the right hand side of (\ref{wspectrum}) is negated and $\Tr_{m/4}(uz_{i_0}(x+a))=\Tr_{m/4}(uz_{i_0}a)=\Tr_{m/4}\left(\frac{u^3}{u^{5}+u^{-5}}\right)^2$.  We thus conclude that
\[
N_4=N_{-4}=\frac{2^{2t-6}-2^{t-4}}{5}.
\]
Now we prove the claim about the existence of solution of (\ref{system}). Again, without loss of generality we assume that $i_0=0$. Multiplying both sides of $\Tr_{m/4}\left( uz_0x\right)=0$ by  $\gamma^j$ and  taking trace to $\F_2$, we get
\[
\Tr_{t}\left(x_jx\right)=0,\quad\forall\,1\leq j\leq 5.
\]
As above, writing $uz_0=\alpha+\beta\gamma$, $\alpha,\beta\in L$, $o(\gamma)=5$, and noting that $z_j=x_j+z_0$, for $1\leq j\leq 5$, we see that the system of equations under consideration reduces to
\[
\Tr_{t/2}(\alpha x)=0,\quad \Tr_{t/2}(\beta x)=0,\quad \Tr_t(z_0x)=1,
\]
We prove that this system of equations has a solution by showing that $z_0$ does not lie in the $\F_4$-linear span of $\alpha$ and $\beta$. Raising $uz_0=\alpha+\beta\gamma$ to the $2^t$-th power gives $u^{-1}z_0=\alpha+\beta\gamma^{-1}$. We solve that
\[
\alpha=\frac{u\gamma^{-1}+u^{-1}\gamma}{\gamma+\gamma^{-1}}z_0,\quad
\beta=\frac{u+u^{-1}}{\gamma+\gamma^{-1}}z_0.
\]
Suppose to the contrary that there exist $r,s\in\F_4$ such that $r\alpha+s\beta=z_0$. After expansion we get
\[
u^2(r+s\gamma^{-1})+u(\gamma+\gamma^{-1})+(r+s\gamma)=0.
\]
This is a degree $2$ equation with coefficients in $\F_{2^4}$. Since $u\in\F_{2^{2t}}$ and $2||t$, we have $u\in\F_{16}^*$. Hence $u^5=1$, which is impossible.\\

\subsection{}
It remains to determine $N_0$, $N_2$, $N_{-2}$.  By Lemma~\ref{known}, we have the following equations
\begin{align*}
N_0+N_2+N_{-2}&=2^{2t}-\frac{2^{2t-5}-2^{t-3}}{5}-2\cdot\frac{2^{2t}+2^t}{5}=19\cdot 2^{2t-5}-3\cdot2^{t-3};\\
2^{t+1}(N_2-N_{-2})&=2^{2t};\\
2^{2t+2}(N_2+N_{-2})&=2^{4t}-\frac{2^{2t-5}-2^{t-3}}{5}\cdot 2^{2t+4}-2\cdot\frac{2^{2t}+2^t}{5}\cdot 2^{2t}=2^{4t-1}.
\end{align*}
Solving these equations, we get
\[
N_0=2^{2t-1}-2^{2t-5}-2^{t-1}+2^{t-3},\quad N_2=2^{2t-4}+2^{t-2},\quad N_{-2}=2^{2t-4}-2^{t-2}.
\]
The proof of Theorem~\ref{even} is now complete.

\end{document}